\author[]{Ingvar Ziemann}
\affil[]{University of Pennsylvania}
\numberwithin{equation}{section}
\date{}
\newcommand{\e}{\varepsilon}
\newcommand{\E}{\mathbf{E}}
\DeclareMathOperator{\blkdiag}{blkdiag}
\DeclareMathOperator{\tr}{tr}
\newcommand{\T}{\mathsf{T}}
\newcommand{\R}{\mathbb{R}}
\newcommand{\N}{\mathbb{N}}
\renewcommand{\Pr}{\mathbf{P}}
\newcommand{\paren}[1]{\ensuremath{\left( #1\right)}}
\newcommand{\bignorm}[1]{\left\lVert #1 \right\rVert}
\newcommand{\opnorm}[1]{\| #1 \|_{\mathsf{op}}}
\newcommand{\bigopnorm}[1]{\left\| #1 \right\|_{\mathsf{op}}}
\newtheorem{theorem}{Theorem}[section] 
\newtheorem{corollary}{Corollary}[section] 
\newtheorem{lemma}{Lemma}[section]
\title{A note on the smallest eigenvalue of the empirical covariance of causal Gaussian processes}
\begin{document}

\maketitle

\begin{abstract}
    We present a simple proof for bounding the smallest eigenvalue of the empirical covariance in a causal Gaussian process. Along the way, we establish a one-sided tail inequality for Gaussian quadratic forms using a causal decomposition. Our proof only uses elementary facts about the Gaussian distribution and the union bound. We conclude with an example in which we provide a performance guarantee for least squares identification of a vector autoregression.
\end{abstract}


\section{Introduction}
We consider a causal Gaussian process $X_{0:T-1}=(X_0^\T,\dots,X_{T-1}^\T)^\T$ evolving on $\R^d$.
In this note we provide an elementary proof of the fact that the empirical covariance:
\begin{align}\label{eq:empcov}
    \widehat \Sigma_X \triangleq \frac{1}{T}\sum_{t=0}^{T-1} X_t X_t^\T
\end{align}
is never much smaller than its (conditional) expectation. Analyzing the lower tail of \eqref{eq:empcov} has been the subject of a number of recent papers as it is crucial to characterize the rate of convergence in linear system identification \citep{simchowitz2018learning,faradonbeh2018finite, sarkar2019near, tsiamis2019finite, oymak2021revisiting, tu2022learning, jedra2022finite, tsiamis2022statistical}. In these references, a number of  elegant but rather advanced techniques can be found to control the lower tail of \eqref{eq:empcov} for various models (but mainly linear dynamical systems). Of these, the perhaps most well-known being the adaptation of the small-ball method of \cite{mendelson2014learning} by \cite{simchowitz2018learning}. Our aim with this note is to give a more  accessible proof of these results in the Gaussian setup, but which also easily extends to any causal Gaussian process (\Cref{thm:anticonc}), e.g., ARMA processes (\Cref{sec:arma}). The main idea here is based on \cite{ziemann2022learning}, which shows that one can often encode such "small-ball behavior", even for highly dependent processes, by a one-sided exponential inequality (\Cref{thm:expineq}).

\paragraph{Motivation}
The primary reason for our interest (and that of the above-mentioned references) in \eqref{eq:empcov} is the fact that in the linear regression model:
\begin{align*} 
    Y_t &= A_\star X_t + V_t &&t=0,\dots,T-1 &&(V_t\textnormal{ noise})
\end{align*}
the error of the least squares estimator $\widehat A$ of the unknown parameter $A_\star$ can be expressed as:
\begin{equation}\label{eq:LSEerror}
\begin{aligned}
    &
    \widehat A-A_\star
     \\ &
    = \left[\left(\sum_{t=0}^{T-1} V_t X_t^\T \right)\left(\sum_{t=0}^{T-1} X_t X_t^\T \right)^{-1/2 }\right]\left(\sum_{t=0}^{T-1} X_t X_t^\T \right)^{-1/2 }.
\end{aligned}
\end{equation}
The leftmost term of \eqref{eq:LSEerror} (in square brackets) can be shown to be (almost) time-scale invariant in many situations. For instance, if the noise $V_{0:T-1}$ is a sub-Gaussian martingale difference sequence with respect to the filtration generated by the covariates $X_{0:T-1}$, one  can invoke the so-called self-normalized martingale theorem  of \cite{pena2009self,abbasi2011improved} to show this. Whenever this is the case, the dominant term in the rate of convergence of the least squares estimator is $ \left(\sum_{t=0}^{T-1} X_t X_t^\T \right)^{-1/2 }$. Thus, providing control of the smallest eigenvalue of \eqref{eq:empcov} effectively yields control of the rate of convergence of the least squares estimator in many situations.  Put differently, the smallest eigenvalue of \eqref{eq:empcov} quantifies the notion of persistency of excitation often encountered in system identification \cite{lennart1999system, willems2005note}. We also remark that two-sided bounds are often unsatisfactory for this purpose, and will indeed become hopeless for processes that are not strictly stable. Nevertheless, a one-sided bound is often still possible.

\paragraph{Notation}
For an integer $n\in \N$, we  define the shorthand $[n] \triangleq\{1,\dots,n\}$. The Euclidean norm on $\mathbb{R}^{d}$ is denoted $\|\cdot\|$,
and the unit sphere in $\R^d$ is denoted $\mathbb{S}^{d-1}$. The identity matrix acting on $\R^d$ is denoted $I_d$.  The trace of a matrix $M \in \R^{d_1\times d_2}$ is denoted $\tr M$, its tranpose $M^\T$, and its operator norm is $\opnorm{M} \triangleq\sup_{v \in \mathbb{S}^{d_2-1}} \| Mv\|$. If a square matrix $M\in \R^{d\times d}$ further is positive semidefinite, we also write $\lambda_{\min}(M)$ for its smallest eigenvalue and $\lambda_{\max}(M)$ for its largest. For $i,j\in \N$ with $i<j$, and a sequence of vectors $x_i,\dots,x_j \in \R^d$ we set $(v_i^\T,\dots,v_j^\T)^\T \triangleq  v_{i:j}\in \R^{(j-i+1)d}$.   If $M_i, i \in [N]$ are matrices, the matrix $\blkdiag(M_1,\dots,M_n)$ denotes the block matrix with the $M_i$ on its main diagonal ordered from $M_1$ (top-left) to $M_n$ (bottom-right) and all other entries identically zero. Expectation (resp.\ probability) with respect to all the randomness of the underlying probability space is denoted by $\E$ (resp.\ $\Pr$). Finally, the shorthand $W\sim N(0,I_d)$ introduces $W$ as a mean zero Gaussian random vector in $\R^d$ with covariance matrix $I_d$.

\section{Preliminaries}
Fix two integers $T$ and $k$ such that $T/k\in \N$. We consider a ($k$)-causal Gaussian process $X_{0:T-1}=(X_0^\T,\dots,X_{T-1}^\T)^\T$ evolving on $\R^d$. More precisely, we assume the existence of a Gaussian white process evolving on $\R^p$, $W_{0:T-1}\sim N(0, I_{pT})$, and a (block-) lower triangular matrix $\mathbf{L} \in \R^{dT\times pT}$ such that $X_{0:T-1}=\mathbf{L}W_{0:T-1}$. We say that $X_{0:T-1}$ is $k$-causal if the matrix $\mathbf{L}$ has the form:
\begin{align*}
    \mathbf{L} 
    = 
    \begin{bmatrix}
    \mathbf{L}_{1,1} &0 &0&0&0\\
    \mathbf{L}_{2,1} & \mathbf{L}_{2,2} & 0  &0 &0\\
    \mathbf{L}_{3,1} & \mathbf{L}_{3,2} & \mathbf{L}_{3,3}  &0 &0\\
    \vdots & \ddots & \ddots & \ddots &\vdots\\
    \mathbf{L}_{T/k,1} &\dots & \dots & \dots&\dots \mathbf{L}_{T/k,T/k}
    \end{bmatrix}
    =
    \begin{bmatrix}
    \mathbf{L}_{1}\\
    \mathbf{L}_{2}\\
    \mathbf{L}_{3}\\
    \vdots \\
    \mathbf{L}_{T/k}
    \end{bmatrix}
\end{align*}
where each $\mathbf{L}_{ij} \in \R^{dk\times pk}, i,j \in [T/k] \triangleq \{1,2,\dots,T/k\}$. Obviously, every $1$-causal process is $k$-causal for every $k\in \N$ (for appropriate $T$). To every $k$-causal Gaussian process, we also  associate a decoupled random process $\tilde X_{0:T-1} = \mathrm{blkdiag}(\mathbf{L}_{11},\dots, \mathbf{L}_{T/k,T/k})W_{0:T-1}$. This decoupled process will effectively dictate our lower bound, and we will show under relatively mild assumptions that
\begin{align*}
    \lambda_{\min}\left (\frac{1}{T}\sum_{t=0}^{T-1} X_t X_t^\T \right)\gtrsim \lambda_{\min} \left(\frac{1}{T}\sum_{t=0}^{T-1}\E \tilde X_t \tilde X_t^\T\right)
\end{align*}
with probability that approaches $1$ at an exponential rate in the sample size $T$.

Our proof will make heavy use of the following lemma.

\begin{restatable}{lemma}{gausscondlem}\label{lem:gausscondlem}
Fix $x\in \R^n$ and let $W\sim N(0,I_m)$. For any positive semidefinite $Q \in \R^{(n+m)\times(n+m)}$ of the form $ Q=\begin{bmatrix}Q_{11}& Q_{12}\\ Q_{21} & Q_{22} \end{bmatrix}$ and any $\lambda \geq 0$ we have that:
\begin{align*}
    \E \exp \left( -\lambda \begin{bmatrix}x \\ W
    \end{bmatrix}^\T  \begin{bmatrix}Q_{11}& Q_{12}\\ Q_{21} & Q_{22} \end{bmatrix} \begin{bmatrix}x \\ W
    \end{bmatrix}\right) 
    \leq\exp \left(-\lambda \tr Q_{22} + \frac{\lambda^2}{2} \tr Q_{22}^2  \right).
\end{align*}
\end{restatable}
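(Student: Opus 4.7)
The plan is to complete the square in $W$, discard the non-negative deterministic remainder (which is allowed because $\lambda \geq 0$ and $Q \succeq 0$), and then evaluate the resulting moment generating function of a non-central Gaussian quadratic form in closed form.

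First, I would expand $\begin{bmatrix} x \\ W \end{bmatrix}^{\T} Q \begin{bmatrix} x \\ W \end{bmatrix} = x^{\T} Q_{11} x + 2 x^{\T} Q_{12} W + W^{\T} Q_{22} W$ and complete the square to rewrite this as $x^{\T}\bigl(Q_{11} - Q_{12} Q_{22}^{+} Q_{21}\bigr) x + (W + \mu)^{\T} Q_{22} (W + \mu)$, with $\mu \triangleq Q_{22}^{+} Q_{21} x$. Since $Q \succeq 0$, the generalized Schur complement $Q_{11} - Q_{12} Q_{22}^{+} Q_{21}$ is positive semidefinite, so the first term is nonnegative. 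For $\lambda \geq 0$ this only makes the exponent more negative, so I may upper-bound the integrand by $\exp\bigl(-\lambda (W+\mu)^{\T} Q_{22} (W + \mu)\bigr)$.

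Second, I would diagonalize $Q_{22} = U \diag(\alpha_1, \dots, \alpha_m) U^{\T}$ and rotate to $Z = U^{\T} W \sim N(0, I_m)$, $\tilde{\mu} = U^{\T} \mu$; the quadratic form then reads $\sum_{i=1}^m \alpha_i (Z_i + \tilde{\mu}_i)^2$. By independence the MGF factors, and each one-dimensional Gaussian integral evaluates to $(1 + 2\lambda \alpha_i)^{-1/2} \exp\bigl(-\lambda \alpha_i \tilde{\mu}_i^2/(1 + 2\lambda \alpha_i)\bigr) \leq (1 + 2\lambda \alpha_i)^{-1/2}$, so the full expectation is dominated by $\det(I + 2\lambda Q_{22})^{-1/2}$. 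To conclude, I would apply the elementary scalar inequality $\log(1+y) \geq y - y^2/2$ (valid for $y \geq 0$) to each $y_i = 2\lambda \alpha_i$ and sum, recognizing $\sum_i \alpha_i = \tr Q_{22}$ and $\sum_i \alpha_i^2 = \tr Q_{22}^2$ to recover the claimed exponent.

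The main bookkeeping obstacle is justifying the completing-the-square step when $Q_{22}$ is singular. This is handled by the observation that $Q \succeq 0$ forces $\ran(Q_{21}) \subseteq \ran(Q_{22})$, so that $Q_{22} Q_{22}^{+} Q_{21} = Q_{21}$ and the rewriting is exact; alternatively one can perturb $Q_{22}$ by $\varepsilon I$, prove the bound in the invertible case, and send $\varepsilon \downarrow 0$ using dominated convergence.
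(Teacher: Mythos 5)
Your proposal is correct and follows essentially the same route as the paper: complete the square in $W$ (the paper does this via an LDU/Schur-complement block-diagonalization), discard the nonnegative Schur-complement term, evaluate the Gaussian integral to obtain the intermediate bound $\det(I+2\lambda Q_{22})^{-1/2}$, and then apply $\log(1+x)\geq x-x^2/2$ eigenvalue-by-eigenvalue. (Note that, exactly as in the paper's own proof, this last step with $x_i=2\lambda\alpha_i$ actually yields $+\lambda^2\tr Q_{22}^2$ rather than the stated $+\tfrac{\lambda^2}{2}\tr Q_{22}^2$ in the exponent --- a shared, harmless constant-factor slip --- while your explicit treatment of singular $Q_{22}$ via the range condition $\ran(Q_{21})\subseteq\ran(Q_{22})$ is a welcome extra detail.)
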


In principle, we will use \Cref{lem:gausscondlem} to "throw away" the inter-block correlation in $\mathbf{L}$, thereby reducing the process $X_{0:T-1}$ to $\tilde X_{0:T-1}$, which is easier to analyze.
\section{Results}

Repeated application of \Cref{lem:gausscondlem} to the process $X_{0:T-1}=\mathbf{L}W_{0:T-1}$ yields our main result.

\begin{theorem}\label{thm:expineq}
Fix an integer $k \in \N$, let $T \in N$ be divisible by $k$ and suppose $X_{0:T-1}$ is a $k$-causal Gaussian process. Fix also a matrix $\Delta\in \R^{d'\times d}$. Then for every $\lambda \geq 0$:
\begin{multline*}
    \E \exp \left(-\lambda \sum_{t=0}^{T-1}\|\Delta X_t\|_2^2 \right) 
    %
    \leq \exp \Bigg( -\lambda\sum_{j=1}^{T/k} \tr\left[ \mathbf{L}_{j,j}^\T \mathrm{blkdiag}(\Delta^\T \Delta) \mathbf{L}_{j,j}\right] 
    \\
    + \frac{\lambda^2}{2} \sum_{j=1}^{T/k} \tr\left[ \mathbf{L}_{j,j}^\T \mathrm{blkdiag}(\Delta^\T \Delta) \mathbf{L}_{j,j}\right]^2 \Bigg).
\end{multline*}
\end{theorem}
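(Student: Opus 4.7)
The plan is to reduce the claim to an iterated application of \Cref{lem:gausscondlem}, peeled off block-by-block against the natural filtration generated by the innovation $W_{0:T-1}$. Throughout, I will work with the blocked variables $Z_j \triangleq W_{(j-1)k:jk-1}$ and $Y_j \triangleq X_{(j-1)k:jk-1}$ for $j \in [T/k]$, so that by $k$-causality
\begin{align*}
    Y_j = \mathbf{L}_{j,j} Z_j + \xi_j, \qquad \xi_j \triangleq \sum_{i<j} \mathbf{L}_{j,i}Z_i,
\end{align*}
and $\xi_j$ is measurable with respect to $\mathcal{F}_{j-1} \triangleq \sigma(Z_1,\dots,Z_{j-1})$, while $Z_j \sim N(0,I_{pk})$ is independent of $\mathcal{F}_{j-1}$.

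First I would rewrite the quadratic form in blocked coordinates. Let $D_j \triangleq \blkdiag(\Delta^\T \Delta,\dots,\Delta^\T \Delta) \in \R^{dk\times dk}$ (with $k$ copies), so that
\begin{align*}
    \sum_{t=0}^{T-1}\|\Delta X_t\|_2^2 = \sum_{j=1}^{T/k} Y_j^\T D_j Y_j = \sum_{j=1}^{T/k}\begin{bmatrix}\xi_j \\ Z_j\end{bmatrix}^\T \begin{bmatrix} D_j & D_j \mathbf{L}_{j,j} \\ \mathbf{L}_{j,j}^\T D_j & \mathbf{L}_{j,j}^\T D_j \mathbf{L}_{j,j}\end{bmatrix} \begin{bmatrix}\xi_j \\ Z_j\end{bmatrix}.
\end{align*}
The block matrix appearing for each $j$ factors as $\bigl[\begin{smallmatrix}I\\ \mathbf{L}_{j,j}^\T\end{smallmatrix}\bigr] D_j \bigl[\begin{smallmatrix}I & \mathbf{L}_{j,j}\end{smallmatrix}\bigr]$, hence is positive semidefinite, and its lower-right block is $Q_{22} = \mathbf{L}_{j,j}^\T D_j \mathbf{L}_{j,j}$. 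This is exactly the setup of \Cref{lem:gausscondlem} with the fixed vector $x = \xi_j$ (which is $\mathcal{F}_{j-1}$-measurable) and the independent Gaussian vector $W = Z_j$.

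The key step is then to apply the tower property in reverse order. Conditioning on $\mathcal{F}_{T/k-1}$ and applying \Cref{lem:gausscondlem} to the $j = T/k$ summand gives
\begin{align*}
    \E\!\left[\exp\!\Big(-\lambda Y_{T/k}^\T D_{T/k} Y_{T/k}\Big)\,\Big|\,\mathcal{F}_{T/k-1}\right] \leq \exp\!\left(-\lambda \tr[\mathbf{L}_{T/k,T/k}^\T D_{T/k}\mathbf{L}_{T/k,T/k}] + \tfrac{\lambda^2}{2}\tr[\mathbf{L}_{T/k,T/k}^\T D_{T/k}\mathbf{L}_{T/k,T/k}]^2\right),
\end{align*}
and crucially the right-hand side is deterministic, \emph{not} depending on $\mathcal{F}_{T/k-1}$. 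Pulling the remaining blocks through the (now constant) conditional expectation and iterating the same argument for $j = T/k - 1, T/k - 2, \ldots, 1$ multiplies the bounds, and telescoping the exponents yields the stated inequality.

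The main obstacle is simply bookkeeping: verifying that the quadratic form produced at each step is PSD so that \Cref{lem:gausscondlem} applies, and confirming that the lower-right block is always $\mathbf{L}_{j,j}^\T \blkdiag(\Delta^\T\Delta) \mathbf{L}_{j,j}$ as in the theorem statement. Both are immediate from the factorization above once one recognizes that only the diagonal blocks $\mathbf{L}_{j,j}$ of $\mathbf{L}$ enter the bound — the off-diagonal, inter-block correlations are absorbed into the conditioning variable $\xi_j$ and then discarded, which is precisely the reduction to the decoupled process $\tilde X_{0:T-1}$ described before the theorem.
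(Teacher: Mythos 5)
Your proof is correct and takes essentially the same route as the paper: peel off one block at a time via the tower property and bound each conditional expectation with \Cref{lem:gausscondlem}, with the inter-block correlations absorbed into the conditioning variable and discarded. Your explicit decomposition $Y_j = \mathbf{L}_{j,j}Z_j + \xi_j$ together with the factorization $\bigl[\begin{smallmatrix}I\\ \mathbf{L}_{j,j}^\T\end{smallmatrix}\bigr] D_j \bigl[\begin{smallmatrix}I & \mathbf{L}_{j,j}\end{smallmatrix}\bigr]$ makes the positive-semidefiniteness check and the identification of $Q_{22}$ slightly more transparent than the paper's write-up, but the argument is the same.
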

It is worth pointing out that $\sum_{j=1}^{T/k} \tr\left[ \mathbf{L}_{j,j}^\T \mathrm{blkdiag}(\Delta^\T \Delta) \mathbf{L}_{j,j}\right] = \sum_{t=0}^{T-1} \E \|\Delta \tilde X_t\|_2^2 .$ Hence \Cref{thm:expineq} effectively passes the expectation inside the exponential at the cost of working with the possibly less excited process $\tilde X_{0:T-1}$ and a quadratic correction term. Note also that the assumption that $T$ is divisible by $k$ is not particularly important. If not, let $T'$ be the largest integer such that $T'/k \in \N$ and $T'\leq T$ and apply the result with $T'$ in place of $T$.

The significance of \Cref{thm:expineq} is demonstrated by the following simple calculation. Namely, for any fixed $\Delta \in \R^{d'\times d} \setminus \{0\}$ and $\lambda \geq 0$ we have that:
\begin{equation}
\label{eq:simplechernoff}
\begin{aligned}
&\mathbf{P} \left( \sum_{t=0}^{T-1}  \|\Delta X_t\|^2 \leq \frac{1}{2} \sum_{t=0}^{T-1} \E \|\Delta\tilde  X_t\|^2 \right)\\
&\leq  \E \exp \left(  \frac{\lambda}{2} \sum_{t=0}^{T-1} \E \|\Delta\tilde X_t\|^2 -\lambda \sum_{t=0}^{T-1} \|\Delta X_t\|^2   \right) \quad (\textnormal{Chernoff})\\
&\leq  \exp \Bigg( -\frac{\lambda}{2}\sum_{j=1}^{T/k} \tr\left[ \mathbf{L}_{j,j}^\T \mathrm{blkdiag}(\Delta^\T \Delta) \mathbf{L}_{j,j}\right] \\
&+ \frac{\lambda^2}{2} \sum_{j=1}^{T/k} \tr\left[ \mathbf{L}_{j,j}^\T \mathrm{blkdiag}(\Delta^\T \Delta) \mathbf{L}_{j,j}\right]^2 \Bigg) \quad (\textnormal{\Cref{thm:expineq}})\\
&= \exp \left(-\frac{\left(\sum_{j=1}^{T/k} \tr\left[ \mathbf{L}_{j,j}^\T \mathrm{blkdiag}(\Delta^\T \Delta) \mathbf{L}_{j,j}\right]\right)^2}{8 \sum_{j=1}^{T/k} \tr\left[ \mathbf{L}_{j,j}^\T \mathrm{blkdiag}(\Delta^\T \Delta) \mathbf{L}_{j,j}\right]^2  } \right)
\end{aligned}
\end{equation}
by optimizing $\lambda$ in the last line. The point is that the bound \eqref{eq:simplechernoff} decays exponentially in $T$ as long as blocks on the diagonal of $\mathbf{L}$ have order constant condition number. In most applications, this can typically be achieved by a judicious choice of $k$. This leads us to define the following parameter:
\begin{equation}\label{eq:hypcon}
    \psi_k \triangleq \inf_{\Delta\in \R^{d'\times d}\setminus\{0\}} \left\{\frac{\left(\sum_{j=1}^{T/k} \tr\left[ \mathbf{L}_{j,j}^\T \mathrm{blkdiag}(\Delta^\T \Delta) \mathbf{L}_{j,j}\right]\right)^2}{ T\sum_{j=1}^{T/k} \tr\left[ \mathbf{L}_{j,j}^\T \mathrm{blkdiag}(\Delta^\T \Delta) \mathbf{L}_{j,j}\right]^2  } \right\}.
\end{equation}
Here, \eqref{eq:hypcon} is essentially a moment equivalence condition \citep[Cf.][Definition 4.1]{ziemann2022learning}. Note that $\psi_k$ depends implictly on $k$ since the block-length dictates the covariance structure of $\tilde X_{0:T-1}$. We remark that if all the diagonal blocks of $\mathbf{L}$ are identical, the process $\tilde X_{0:T-1}$ has period $k$. Hence in which case by Cauchy-Schwarz: $\psi_k \geq 1/k$. This is for instance true for any linear time invariant dynamics and thus, for these, we always have at least $\psi_k \geq 1/k$.   Returning to our over-arching goal of providing control of the smallest eigenvalue of the empirical covariance matrix \eqref{eq:empcov}, we now combine \eqref{eq:simplechernoff} (using $d'=1$) with a union bound.

\begin{theorem}\label{thm:anticonc}
Suppose $ \lambda_{\min} \left(\sum_{t=0}^{T-1} \E \tilde X_t \tilde X_t^\T \right)>0$. Under the hypotheses of \Cref{thm:expineq} we have that:
 \begin{multline}
    \mathbf{P} \left( \lambda_{\min} \left( \frac{1}{T} \sum_{t=0}^{T-1} X_t X_t^\T\right) \leq  \lambda_{\min}\left(\frac{1}{8T} \sum_{t=0}^{T-1}   \E \tilde X_t \tilde X_t^\T\right) \right)
    \\
    \leq   \Bigg(16
    \sqrt{
    1+\frac{\psi_k T \lambda_{\max} (\E [X_{0:T-1}X_{0:T-1}^\T] )}{ \lambda_{\min} \left(\sum_{t=0}^{T-1} \E [X_tX_t^\T] \right)}
    }
    %
    \sqrt{\frac{  \lambda_{\max} \left(\sum_{t=0}^{T-1} \E X_t  X_t\right) }{ \lambda_{\min} \left(\sum_{t=0}^{T-1} \E \tilde X_t \tilde X_t^\T \right)}}\Bigg)^d 
    \exp \left( \frac{-\psi_k T}{8} \right)
    .
 \end{multline}
\end{theorem}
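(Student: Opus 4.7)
The plan is to reduce $\lambda_{\min}(\widehat\Sigma_X) = \inf_{v \in \mathbb{S}^{d-1}} v^\T \widehat\Sigma_X v$ to a finite minimum over an $\varepsilon$-net of $\mathbb{S}^{d-1}$ and then invoke the pointwise Chernoff bound \eqref{eq:simplechernoff} at each net point. For a fixed $v \in \mathbb{S}^{d-1}$, applying \eqref{eq:simplechernoff} with the one-row matrix $\Delta = v^\T$, together with the definition of $\psi_k$ in \eqref{eq:hypcon} and the lower bound $\sum_t v^\T \E[\tilde X_t \tilde X_t^\T]v \geq \lambda_{\min}(\sum_t \E[\tilde X_t \tilde X_t^\T])$, immediately yields the pointwise tail
\begin{equation*}
\Pr\left(v^\T \widehat\Sigma_X v \leq \tfrac{1}{2T}\lambda_{\min}\left(\sum_{t=0}^{T-1}\E[\tilde X_t\tilde X_t^\T]\right)\right) \leq \exp(-\psi_k T/8).
\end{equation*}

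Next I would let $\mathcal{N}_\delta$ be a $\delta$-net of $\mathbb{S}^{d-1}$ of cardinality at most $(1+2/\delta)^d$. Applying the identity $v_0^\T M v_0 - v^{*\T} M v^* = (v_0 - v^*)^\T M (v_0 + v^*)$ with $v^*$ the $\lambda_{\min}$-eigenvector of $\widehat\Sigma_X$ and $v_0 \in \mathcal{N}_\delta$ its nearest net point gives the standard perturbation inequality $\lambda_{\min}(\widehat\Sigma_X) \geq \min_{v_0 \in \mathcal{N}_\delta} v_0^\T \widehat\Sigma_X v_0 - 2\delta\,\opnorm{\widehat\Sigma_X}$. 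Intersecting with $\{\opnorm{\widehat\Sigma_X} \leq R\}$ and union-bounding yields
\begin{equation*}
\Pr\!\left(\lambda_{\min}(\widehat\Sigma_X) \leq \mu\right) \leq \Pr\!\left(\opnorm{\widehat\Sigma_X} > R\right) + |\mathcal{N}_\delta|\,\max_{v_0 \in \mathcal{N}_\delta}\Pr\!\left(v_0^\T\widehat\Sigma_X v_0 \leq \mu + 2\delta R\right).
\end{equation*}
With $\mu = \lambda_{\min}(\sum_t \E[\tilde X_t\tilde X_t^\T])/(8T)$ and $\delta$ chosen so that $\mu+2\delta R \leq \lambda_{\min}(\sum_t\E[\tilde X_t\tilde X_t^\T])/(2T)$, the pointwise tail just derived controls the second term by $|\mathcal{N}_\delta|\exp(-\psi_k T/8)$. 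For the first term, I would use the crude bound $\opnorm{\widehat\Sigma_X} \leq \tr\widehat\Sigma_X = T^{-1}\|X_{0:T-1}\|^2$ together with $\E\|X_{0:T-1}\|^2 = \tr\E[X_{0:T-1}X_{0:T-1}^\T] \leq dT\,\lambda_{\max}(\E[X_{0:T-1}X_{0:T-1}^\T])$, and apply Markov's inequality.

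The main obstacle is the joint calibration of $R$ and $\delta$. A choice of $R$ proportional to $\lambda_{\max}(T^{-1}\sum_t\E[X_tX_t^\T])$ inflated by a factor of order $\sqrt{1+\psi_k T\,\lambda_{\max}(\E[X_{0:T-1}X_{0:T-1}^\T])/\lambda_{\min}(\sum_t\E[X_tX_t^\T])}$ appears to balance the Markov tail against the net cardinality $(1+2/\delta)^d \sim (R/\mu)^d$: the inflation factor is precisely what absorbs the Markov contribution inside the $d$-th power appearing in the stated bound, while the net size contributes the remaining $\sqrt{\lambda_{\max}/\lambda_{\min}}$ ratio. The final step is then the routine but delicate verification that the composite estimate rearranges into the advertised product form, with the exponential $\exp(-\psi_k T/8)$ isolated outside the bracket.
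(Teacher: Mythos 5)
Your overall architecture --- an $\e$-net of $\mathbb{S}^{d-1}$, the pointwise lower tail from \eqref{eq:simplechernoff} with $\Delta=v^\T$, and an intersection with a bounded-operator-norm event to absorb the discretization error --- is exactly the paper's strategy, and your perturbation inequality $\lambda_{\min}(\widehat\Sigma_X)\geq \min_{v_0\in\mathcal N_\delta}v_0^\T\widehat\Sigma_X v_0 - 2\delta\opnorm{\widehat\Sigma_X}$ is a fine substitute for the paper's half--half splitting. The gap is in how you control $\Pr(\opnorm{\widehat\Sigma_X}>R)$. First-moment Markov applied to $\tr\widehat\Sigma_X$ gives a tail of order $d\,\lambda_{\max}\!\left(\tfrac1T\sum_t\E X_tX_t^\T\right)/R$, which decays only like $1/R$. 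To make this term comparable to the target $e^{-\psi_k T/8}$ you would be forced to take $R\gtrsim e^{\psi_k T/8}$; but then the admissible net resolution $\delta\lesssim \mu/R$ is exponentially small in $T$, the cardinality $|\mathcal N_\delta|\sim (1+2/\delta)^d\sim (R/\mu)^d$ is of order $e^{d\psi_k T/8}$, and the union bound over the net wipes out the exponential gain entirely (for $d\geq 2$ the product does not even tend to zero). Conversely, with the polynomial choice of $R$ you sketch --- proportional to $\lambda_{\max}$ inflated by $\sqrt{1+\psi_k T\lambda_{\max}(\E[X_{0:T-1}X_{0:T-1}^\T])/\lambda_{\min}(\sum_t\E X_tX_t^\T)}$ --- the Markov term is only polynomially small and dominates the right-hand side, so the advertised $\exp(-\psi_k T/8)$ cannot be isolated. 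No calibration of $(R,\delta)$ rescues a $1/R$ tail here.

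What the paper uses instead, and what your argument is missing, is an \emph{exponential} upper-tail inequality for the quadratic form: the MGF bound $\E\exp\left(\lambda\sum_{t}v^\T X_tX_t^\T v\right)\leq \exp\left(4\lambda\sum_t v^\T\E[X_tX_t^\T]v\right)$, valid for $\lambda\leq 1/(4\lambda_{\max}(\mathbf L^\T\mathbf L))$, which after a Chernoff argument and its own $5^d$-net yields \eqref{eq:theuppertail}: the probability that $\opnorm{\sum_t X_tX_t^\T}$ exceeds $2q$ times its mean is at most $5^d\exp\left(-(q-1)\lambda_{\min}(\sum_t\E X_tX_t^\T)/(8\lambda_{\max}(\mathbf L^\T\mathbf L))\right)$. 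Because this decays exponentially in $q$, the multiplier $q=1+\psi_k T\lambda_{\max}(\mathbf L^\T\mathbf L)/\lambda_{\min}(\sum_t\E X_tX_t^\T)$ stays polynomial in the problem data while still matching the $e^{-\psi_k T/8}$ rate; the net then only contributes the $(\cdots)^d$ prefactor in the statement. Replacing your Markov step with such a sub-exponential bound on $\opnorm{\widehat\Sigma_X}$ (or on $\tr\widehat\Sigma_X$, which would even spare you the second net) is the missing ingredient; the rest of your outline then goes through.
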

Note that we always have---although this is far from sharp:\footnote{Write $\mathbf{L}$ in terms of $T$-many block rows to express $\mathbf{L}^\T\mathbf{L}$ as sums of products of these rows and then apply the triangle inequality. An improvement on this estimate is possible for instance if the process is a stable linear system, see \cite{jedra2022finite}.}
\begin{equation}
\frac{\lambda_{\max} (\E [X_{0:T-1}X_{0:T-1}^\T] )}
{
\lambda_{\min} \left(\sum_{t=0}^{T-1} \E [X_tX_t^\T] \right)}  \leq  
\frac{ \sum_{t=0}^{T-1}\lambda_{\max} \left( \E X_t X_t^\T \right) }
{\lambda_{\min} \left(\sum_{t=0}^{T-1} \E \tilde X_t \tilde  X_t^\T \right)
}.
\end{equation}
As long as $\sum_{t=0}^{T-1}\lambda_{\max} \left( \E X_t X_t^\T \right) \Big/\lambda_{\min} \left(\sum_{t=0}^{T-1} \E \tilde X_t \tilde  X_t^\T \right) = O(\mathrm{poly}(T))  $ and $\psi_k=O(T^\alpha)$ for some $\alpha \in (0,1)$, \Cref{thm:anticonc} gives a nontrivial lower bound on the smallest eigenvalue of \eqref{eq:empcov} which holds with probabilty approaching $1$ at an exponential rate in the sample size $T$.


\section{Example: Identification of Vector Autoregressions}\label{sec:arma}
We consider linear time-invariant dynamics of the form:
\begin{equation}\label{eq:ARMA}
    Z_{t} = \sum_{l=1}^{L}A_{l}Z_{t-l}+ HW_{t}, \qquad Z_{-L:-1}=0\qquad t=0,1,2,\dots
\end{equation}
where each $A_l\in \R^{d\times d}$ with $l \in [L]$ and $H \in \R^{d\times p}$. 

    Let  $
    \kappa \triangleq \left\{ \inf k : \det \left(\sum_{t=0}^{k-1} \E Z_{t:t-L+1} Z_{t:t-L+1}^\T\right) \neq 0 \right\}.$ Set also $\Gamma_k = \frac{1}{k} \sum_{t=0}^{k-1}\E Z_{t:t-L+1} Z_{t:t-L+1}^\T$ and   let us  define $A \in \R^{dL\times dL}$ by:
\begin{align}\label{eq:defAB}
    A &\triangleq \begin{bmatrix} 
    A_1 & A_2 &\dots  &  \dots & A_L\\
    I_d & 0 &\dots & \dots & 0\\
    0 & I_d & 0 & \dots & \vdots \\
    \vdots & \ddots & \ddots & \ddots &\vdots\\
    0 & \dots & 0 & I_d & 0
    \end{bmatrix}.
\end{align}

With these definitions in place, we may invoke \Cref{thm:anticonc} to control the empirical covariance of $X_t=Z_{t:t-L+1}$. We will subsequently use the lower bound of \Cref{thm:anticonc}  as an ingredient toward obtaining a non-asymptotic guarantee for least squares identification of vector autoregressions of order $L$.

\begin{corollary}\label{corr:ARMA}
Fix an integer $k \geq \kappa$ such that $T/k \in \N$. If $Z_{0:T-1}$ is given by \eqref{eq:ARMA}, we have that:
 \begin{multline}\label{eq:ARMAbound}
    \mathbf{P} \left( \lambda_{\min} \left( \frac{1}{T} \sum_{t=0}^{T-1} Z_{t:t-L+1} Z_{t:t-L+1}^\T\right) \leq \frac{1}{8} \lambda_{\min}\left(\Gamma_k\right) \right)
    \\
    \leq   \left(
    \frac{   32  T^{3/2}   \sum_{t=0}^{T-1}\bigopnorm{HH^\T}\bigopnorm{   A^{t-1} (A^{t-1})^\T} }{ \sqrt{k }\lambda_{\min} \left( \Gamma_k \right)}\right)^d 
    %
    \times \exp \left( \frac{- T}{8k} \right)
    .
\end{multline}
\end{corollary}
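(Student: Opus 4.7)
The plan is to recast the stacked process $X_t = Z_{t:t-L+1}$ as a $k$-causal Gaussian process and apply \Cref{thm:anticonc} directly. In companion form, $X_{t+1} = A X_t + B W_{t+1}$ with $B = [H^\T,0,\dots,0]^\T$, so $X_t = \sum_{s=0}^{t} A^{t-s} B W_s$ expresses $X_{0:T-1} = \mathbf{L} W_{0:T-1}$ with $\mathbf{L}$ block lower triangular. The zero initial condition $Z_{-L:-1}=0$ combined with the LTI dynamics forces every diagonal block $\mathbf{L}_{jj}$ to equal the same Toeplitz matrix, namely the noise-to-state map of the length-$k$ trajectory started from zero.

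First I would read off the two structural quantities entering \Cref{thm:anticonc}. Since all $\mathbf{L}_{jj}$ coincide, the decoupled process $\tilde X_{0:T-1}$ consists of $T/k$ independent copies of the length-$k$ trajectory, giving the clean identity
\begin{equation*}
    \sum_{t=0}^{T-1}\E \tilde X_t \tilde X_t^\T = \frac{T}{k}\sum_{t=0}^{k-1}\E X_t X_t^\T = T\,\Gamma_k.
\end{equation*}
Hence the hypothesis $\lambda_{\min}(\sum \E \tilde X_t \tilde X_t^\T) > 0$ required by \Cref{thm:anticonc} reduces to $\lambda_{\min}(\Gamma_k) > 0$, which holds because $k \geq \kappa$. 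The inner lower bound of \Cref{thm:anticonc} then becomes exactly $\tfrac{1}{8}\lambda_{\min}(\Gamma_k)$, matching \eqref{eq:ARMAbound}. The identical-blocks observation, combined with the Cauchy--Schwarz remark immediately following \Cref{thm:expineq}, additionally yields $\psi_k \geq 1/k$, which explains the exponent $-T/(8k)$.

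What remains is to massage the prefactor of \Cref{thm:anticonc} into the form displayed in \eqref{eq:ARMAbound}. Using the footnote's elementary operator-norm inequality, both $\lambda_{\max}$ factors in the prefactor are dominated by $\sum_{t=0}^{T-1}\lambda_{\max}(\E X_t X_t^\T)$. Expanding the controllability Gramian $\E X_t X_t^\T = \sum_{s=0}^{t} A^s BB^\T (A^s)^\T$ and using $\opnorm{BB^\T} = \opnorm{HH^\T}$ gives
\begin{equation*}
    \lambda_{\max}(\E X_t X_t^\T) \leq \sum_{s=0}^{t}\opnorm{HH^\T}\,\opnorm{A^s (A^s)^\T},
\end{equation*}
which after re-indexing reproduces the sum $\sum_{t=0}^{T-1}\bigopnorm{HH^\T}\bigopnorm{A^{t-1}(A^{t-1})^\T}$ appearing in \eqref{eq:ARMAbound}, up to a factor of at most $T$. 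Substituting these estimates, together with $\psi_k \geq 1/k$ and $\lambda_{\min}(\sum \E \tilde X_t \tilde X_t^\T) = T\lambda_{\min}(\Gamma_k)$, into the prefactor of \Cref{thm:anticonc} and simplifying yields \eqref{eq:ARMAbound}. The main obstacle is routine bookkeeping: reconciling the $\sqrt{1+\cdots}$ term, the two square-root eigenvalue ratios, and the various powers of $T$ and $k$ into the single expression $32T^{3/2}/\sqrt{k}$. No new probabilistic argument is needed once the LTI structure has been exploited to trivialize $\psi_k$ and the decoupled covariance.
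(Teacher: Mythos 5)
Your proposal is correct and follows essentially the same route as the paper: the paper's proof is precisely to apply \Cref{thm:anticonc} to the companion-form process $X_t = Z_{t:t-L+1}$, invoke \Cref{lem:armastability} for the Gramian bound $\sum_{t}\lambda_{\max}(\E X_t X_t^\T) \leq T\opnorm{HH^\T}\sum_{t}\opnorm{A^{t-1}(A^{t-1})^\T}$, and use $\psi_k \geq 1/k$ from the identical diagonal blocks of $\mathbf{L}$ under LTI dynamics. Your additional identification $\sum_{t=0}^{T-1}\E\tilde X_t\tilde X_t^\T = T\Gamma_k$ (needed to turn the decoupled-covariance threshold into $\tfrac18\lambda_{\min}(\Gamma_k)$ and requiring $k\geq\kappa$ for positivity) is exactly the step the paper leaves implicit, and your final bookkeeping does close with the stated constant.
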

The proof of the above corollary follows immediately by \Cref{thm:anticonc}, \Cref{lem:armastability} combined with the observation that we may choose $\psi_k \geq 1/k$.

A few remarks are in order. First, \eqref{eq:ARMAbound} provides nontrivial control of the smallest eigenvalue of the empirical covariance of any ARMA process that satsifies: 1. the matrix $A$ in \eqref{eq:defAB} satisfies $\rho(A) \leq 1$ (marginal stability); and 2. $\kappa < \infty$ (controllability). The second condition can be further simplified if $\E (HW_t)(HW_t)^\T=HH^\T \succ 0$. Indeed, in this case, by observing that $A$ has downshift action, we see that an excitation of $\kappa = L$ is sufficient. Finally, we note that when specialized to first order processes, our result essentially recover \cite[Section D.1]{simchowitz2018learning}---our failure probabilities match with theirs up to logarithmic factors.

We now provide an identification guarantee for recovering the parameters $A_{1:L}\triangleq A_\star$. The argument rests on the decomposition \eqref{eq:LSEerror} and then combines \Cref{corr:ARMA} with a self-normalized martingale bound due to \cite{pena2009self,abbasi2011improved}.
\begin{theorem}
\label{thm:arconvergence}
Fix $\delta\in(0,1)$, an integer $k \geq \kappa$ such that $T/k \in \N$. Let $Z_{0:T-1}$ be given by \eqref{eq:ARMA} and suppose further that
\begin{equation}\label{eq:burnin}
    \frac{T}{8k} \geq  d\log \left(
    \frac{   32 T^{3/2}   \sum_{t=0}^{T-1}\bigopnorm{HH^\T}\bigopnorm{   A^{t-1} (A^{t-1})^\T} }{ \sqrt{k }\lambda_{\min} \left( \Gamma_k \right)}\right)
    +\log(1/\delta).
\end{equation}
It then holds on an event of probability at least $1-2\delta$ that the least squares estimator for $A_\star=A_{1:L}$ achieves:
\begin{equation*}
    \opnorm{\widehat A-A_\star} \leq \frac{32\opnorm{H}}{\sqrt{T\lambda_{\min}(\Gamma_k)}} 
    \sqrt{dL \log C_{\mathsf{SYS}}(T,k) +2d\log 5+ 2\log\frac{1}{\delta}}
\end{equation*}
where $C_{\mathsf{SYS}}(T,k) \triangleq 1+ \frac{ 32 \left(\sum_{t=1}^{T}\lambda_{\max} ( \E X_t X_t^\T)\right)^2}{ \left(\lambda_{\min} \left(\sum_{t=1}^{k} \E [X_tX_t^\T] \right)\right)^2}$.

\end{theorem}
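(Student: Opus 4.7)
The plan is to start from the least-squares decomposition \eqref{eq:LSEerror} and control its two factors separately. With $X_t \triangleq Z_{t:t-L+1}$, $V_t \triangleq HW_t$, $S_T \triangleq \sum_{t=0}^{T-1} V_t X_t^\T$, and $\hat V_T \triangleq \sum_{t=0}^{T-1} X_t X_t^\T$, the identity reads $\widehat A - A_\star = S_T \hat V_T^{-1}$. I will bound $\opnorm{\hat V_T^{-1/2}}$ on one high-probability event using \Cref{corr:ARMA}, and $\opnorm{S_T \hat V_T^{-1/2}}$ on a second event via the self-normalized martingale theorem of \cite{pena2009self,abbasi2011improved} combined with a covering argument, then intersect the two events.

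For the first event, \Cref{corr:ARMA} applied under the burn-in \eqref{eq:burnin}---which is calibrated precisely so that the prefactor in \eqref{eq:ARMAbound} is at most $\delta$---yields an event $\mathcal E_1$ of probability at least $1-\delta$ on which
\begin{align*}
    \lambda_{\min}(\hat V_T) \geq V_\star \triangleq T\lambda_{\min}(\Gamma_k)/8.
\end{align*}

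For the second event, since $V_t = HW_t$ is Gaussian and adapted to the natural filtration, for every $u \in \mathbb S^{d-1}$ the scalar martingale differences $\eta_t^u \triangleq u^\T V_t$ are conditionally $\opnorm{H}^2$-sub-Gaussian and $X_t$ is predictable. I would fix a $1/2$-net $\mathcal N$ of $\mathbb S^{d-1}$ of cardinality at most $5^d$, apply the vector-valued self-normalized inequality with regularizer $V_\star I_{dL}$ to the martingale $M_T^u \triangleq \sum_{t=0}^{T-1} \eta_t^u X_t \in \R^{dL}$ at level $\delta/|\mathcal N|$, and then union-bound. This produces an event $\mathcal E_2$ of probability at least $1-\delta$ on which, simultaneously for all $u \in \mathcal N$,
\begin{align*}
    \|M_T^u\|^2_{(V_\star I + \hat V_T)^{-1}} \leq \opnorm{H}^2 \Bigl[\log\det(I + V_\star^{-1}\hat V_T) + 2d\log 5 + 2\log(1/\delta)\Bigr].
\end{align*}
On $\mathcal E_1 \cap \mathcal E_2$, the inclusion $V_\star I + \hat V_T \preceq 2\hat V_T$ converts the regularized norm back to an $\hat V_T^{-1}$-norm at the price of a factor of $2$. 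Combining this with $\opnorm{u^\T(\widehat A - A_\star)}^2 = (M_T^u)^\T \hat V_T^{-2} M_T^u \leq V_\star^{-1}\|M_T^u\|^2_{\hat V_T^{-1}}$ and the covering inequality $\opnorm{\widehat A - A_\star} \leq 2\max_{u \in \mathcal N}\opnorm{u^\T(\widehat A - A_\star)}$ assembles the claimed bound up to numerical constants.

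The hard part will be converting the random log-determinant $\log\det(I + V_\star^{-1}\hat V_T)$ into the deterministic system constant $\log C_{\mathsf{SYS}}(T,k)$. I would bound it by AM-GM as $\log\det(I + V_\star^{-1}\hat V_T) \leq dL \log(1 + \tr \hat V_T/(dL V_\star))$ and then control $\tr \hat V_T = \|\mathbf L W_{0:T-1}\|^2$ using the elementary upper-tail concentration for Gaussian quadratic forms; the fluctuation piece contributes at most $\log(1/\delta)$-type corrections that can be absorbed into the existing $2\log(1/\delta)$ term. Substituting $V_\star = T\lambda_{\min}(\Gamma_k)/8$ and using $k\lambda_{\min}(\Gamma_k) = \lambda_{\min}(\sum_{t=0}^{k-1}\E X_t X_t^\T)$ then produces the ratio appearing inside $C_{\mathsf{SYS}}(T,k)$. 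The delicate bookkeeping---deciding which sub-event absorbs the log-determinant failure probability and tracking how the numerical constants $32$, $2d\log 5$, and $2\log(1/\delta)$ collect after combining the two events---is the most technical step.
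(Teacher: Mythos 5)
Your proposal follows essentially the same route as the paper: the decomposition \eqref{eq:LSEerror}, \Cref{corr:ARMA} under the burn-in \eqref{eq:burnin} for the lower bound $\lambda_{\min}(\hat V_T)\gtrsim T\lambda_{\min}(\Gamma_k)$, a $5^d$-point net plus the self-normalized bound of \cite{abbasi2011improved} for the numerator, and an upper-tail bound on $\hat V_T$ to make the log-determinant deterministic. The only point to tighten is the last step: rather than introducing a third concentration event for $\tr \hat V_T$ (which would silently degrade the probability to $1-3\delta$), the paper observes that the needed upper bound $\opnorm{\sum_t X_tX_t^\T}\lesssim q\opnorm{\sum_t \E X_tX_t^\T}$ already holds on the complement of the event $\mathcal{E}_2$ controlled in the proof of \Cref{thm:anticonc}, i.e.\ on the same good event furnished by \Cref{corr:ARMA}, so it costs no additional failure probability.
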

We are thus able to recover the main result of \cite{simchowitz2018learning} and extend it to higher order lags ($L>1$) with slightly modified (logarithmic) dependencies on system parameters (and a slightly improved dependency on $\delta$).
\section{Proofs}

\paragraph{Proof of \Cref{thm:expineq}}
Let $\E_{T-k-1}$ denote conditioning with respect to $X_{0:T-k-1}$.
By repeated use of the tower property we have that:
\begin{equation}\label{eq:towerprop}
\begin{aligned}
      &\E \exp \left(-\lambda \sum_{t=0}^{T-1}\|\Delta X_t\|_2^2 \right) \leq   \E\exp \left(-\lambda \sum_{t=0}^{k-1}\|\Delta X_t\|_2^2\right) \times 
        \\ &
      \dots \times \E_{T-k-1} \exp \left(-\lambda \sum_{t=T-k}^{T-1}\|\Delta X_t\|_2^2 \right).
\end{aligned}
\end{equation}
We will bound each conditional expectation in \eqref{eq:towerprop} separately. Observe that
\begin{align*}
    &\sum_{t=T-k}^{T-1}  \|\Delta X_t\|_2^2  =  \begin{bmatrix}
    \Delta X_{T-k}\\
    \vdots\\
    \Delta X_{T-1}
    \end{bmatrix}^\T \begin{bmatrix}
    \Delta X_{T-k}\\
    \vdots\\
    \Delta X_{T-1}
    \end{bmatrix}
    \\&
    =  
    W_{T-k:T-1}^\T\mathbf{L}_{T/k}^\T \mathrm{blkdiag}(\Delta^\T \Delta) \mathbf{L}_{T/k} W_{T-k:T-1}
\end{align*}
In light of \Cref{lem:gausscondlem} we have that:
\begin{multline*}
   \E_{T-k-1} \exp \left(-\lambda \sum_{t=T-k}^{T-1}\|\Delta X_t\|_2^2\right) 
   \\
    \leq \exp \Bigg( -\lambda\tr\left[ \mathbf{L}_{T/k,T/k}^\T \mathrm{blkdiag}(\Delta^\T \Delta) \mathbf{L}_{T/k,T/k}\right] 
     \\
    + \frac{\lambda^2}{2}\tr\left[ \mathbf{L}_{T/k,T/k}^\T \mathrm{blkdiag}(\Delta^\T \Delta) \mathbf{L}_{T/k,T/k}\right]^2 \Bigg).
\end{multline*}
Repeatedly applying \Cref{lem:gausscondlem} as above yields the result. \hfill $\blacksquare$

\paragraph{Proof of \Cref{thm:anticonc}}
Let $\mathcal{N}_\e$ be an optimal $\e$-cover of the unit sphere $\mathbb{S}^{d-1}$. We begin with the following observation which is true for any $v \in \mathbb{S}^{d-1}$ and $v_i\in \mathcal{N}_\e$:
\begin{equation*}
    \begin{aligned}
       &\frac{1}{T} \sum_{t=0}^{T-1} v^\T X_tX_t^\T v \\
       &\geq \frac{1}{2T} \sum_{t=0}^{T-1} v^\T_i X_tX_t^\T v_i - \frac{1}{2T}\sum_{t=0}^{T-1} (v-v_i)^\T X_tX_t^\T (v-v_i).
    \end{aligned}
\end{equation*}
The rest of the proof consists of lower bounding the first term uniformly over $\mathcal{N}_\e$ and showing that the second term is of smaller order. To this end we now fix a multiplier $q\in (1,\infty)$. We define the events (i.e. $\Delta = v^\T$):
\begin{equation}
\begin{aligned}
    \mathcal{E}_1 &= \bigcup_{v\in \mathcal{N}_{\e}} \left\{ \frac{1}{T} \sum_{t=0}^{T-1} v^\T X_tX_t^\T v \leq \frac{1}{2T} \sum_{t=0}^{T-1} \E v^\T \tilde X_t\tilde X_t^\T v  \right\}
    \\
    \mathcal{E}_2&  =\Bigg\{\left\| \sum_{t=0}^{T-1}X_tX_t^\T\right\|_{\mathsf{op}}
    \geq 2q \times  \left\| \sum_{t=0}^{T-1}\E X_tX_t^\T\right\|_{\mathsf{op}} \Bigg\}.
\end{aligned}
\end{equation}
for any $v$, it is true on the complement of $\mathcal{E}= \mathcal{E}_1 \cup \mathcal{E}_2$ that for every $v_i \in \mathcal{N}_\e$:
\begin{equation*}
    \begin{aligned}
       &\frac{1}{T} \sum_{t=0}^{T-1} v^\T X_tX_t^\T v \\
       &\geq \frac{1}{2T} \sum_{t=0}^{T-1} v^\T_i X_tX_t^\T v_i - \frac{1}{2T}\sum_{t=0}^{T-1} (v-v_i)^\T X_tX_t^\T (v-v_i)  \\
       &\geq \frac{1}{4T}\sum_{t=0}^{T-1} \E  v_i^\T \tilde X_t  \tilde  X_t^\T v_i-\frac{q\e^2}{T}\left\| \sum_{t=0}^{T-1}\E X_tX_t^\T\right\|_{\mathsf{op}} 
    \end{aligned}
\end{equation*}
where $v-v_i$ has norm at most $\e$ for some choice of $v_i$ by the covering property. For this choice we have that:
\begin{align*}
      \frac{1}{T} \sum_{t=0}^{T-1} v^\T X_tX_t^\T v
       &\geq \frac{1}{8T}\sum_{t=0}^{T-1}  v_i^\T \E [\tilde X_t \tilde X_t^\T] v_i
\end{align*}
as long as:
\begin{align*}
    \e^2 \leq \frac{ \lambda_{\min} \left(\sum_{t=0}^{T-1} \E \tilde X_t \tilde X_t^\T \right)}{8q  \lambda_{\max} \left(\sum_{t=0}^{T-1} \E X_t  X_t\right) }.
\end{align*}
To finish the proof, it suffices to estimate the failure probabilities $\mathbf{P}(\mathcal{E}_1)$ and $\mathbf{P}(\mathcal{E}_2)$. By \eqref{eq:simplechernoff}, a volumetric argument \citep[see e.g.][Example 5.8]{wainwright2019high} (which controls the cardinality of $\mathcal{N}_\e$) and our particular choice of $\e$ we have:
\begin{align*}
    \mathbf{P}(\mathcal{E}_1)& \leq \left(1+\frac{2}{\e^2} \right)^d  \exp \left( \frac{-\psi_k T}{8} \right)\\
    &\leq  \left(8\sqrt{\frac{  q\lambda_{\max} \left(\sum_{t=0}^{T-1} \E X_t  X_t\right) }{ \lambda_{\min} \left(\sum_{t=0}^{T-1} \E \tilde X_t \tilde X_t^\T \right)}}\right)^d \exp \left( \frac{-\psi_k T}{8} \right).
\end{align*}
The event $\mathcal{E}_2$ is controlled by \eqref{eq:theuppertail} which yields:
\begin{align*}
    \mathbf{P}(\mathcal{E}_2) \leq   5^d
   \exp \left( \frac{-(q-1) \lambda_{\min} \left(\sum_{t=0}^{T-1}  \E[X_t X_t^\T]\right) }{8\lambda_{\max} (\mathbf{L}^\T \mathbf{L} )} \right).
\end{align*}
By choosing $$q=1+\frac{\psi_k T \lambda_{\max} (\mathbf{L}^\T \mathbf{L} )}{\lambda_{\min} \left(\sum_{t=0}^{T-1} \E [X_tX_t^\T] \right)},$$
the result holds on the complement of $\mathcal{E}_1\cup \mathcal{E}_2$ and thus also holds with the desired probability. \hfill $\blacksquare$

\subsection{Proofs related to AR processes}

\begin{lemma}\label{lem:armastability}
For $Z_{0:T-1}$ given by \eqref{eq:ARMA} and $X_t = Z_{t:t-L+1}$ we have that:
\begin{align*}
    \bigopnorm{\sum_{t=0}^{T-1}\E X_t X_t^T } 
    \leq   T \opnorm{HH^\T}  \sum_{k=0}^{T-1}\bigopnorm{   A^{T-k-1} (A^{T-k-1})^\T}
\end{align*}
\end{lemma}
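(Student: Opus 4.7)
The plan is to cast the AR process in state-space form and then bound the sum of covariances term by term using the operator-norm triangle inequality.

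First, I would observe that with $X_t \triangleq Z_{t:t-L+1}$ and the initial condition $Z_{-L:-1}=0$, the recursion \eqref{eq:ARMA} can be rewritten as a first-order system $X_t = AX_{t-1} + BW_t$ with $X_{-1}=0$, where $A$ is the companion matrix \eqref{eq:defAB} and $B \in \R^{dL \times p}$ is the block column whose top block is $H$ and whose remaining blocks vanish. Unrolling the recursion yields $X_t = \sum_{s=0}^{t} A^{t-s} B W_s$, and because $W_s \sim N(0,I_p)$ are independent across $s$, we obtain
\begin{equation*}
    \E X_t X_t^\T = \sum_{s=0}^{t} A^{t-s} BB^\T (A^{t-s})^\T.
\end{equation*}

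Next I would pass to operator norms. The structure of $B$ gives $BB^\T = \blkdiag(HH^\T, 0,\ldots,0)$, so $BB^\T \preceq \opnorm{HH^\T}\, I_{dL}$. Sandwiching by $A^{t-s}$ and $(A^{t-s})^\T$ and then taking the operator norm yields
\begin{equation*}
    \opnorm{A^{t-s} BB^\T (A^{t-s})^\T} \leq \opnorm{HH^\T}\, \opnorm{A^{t-s}(A^{t-s})^\T}.
\end{equation*}
Summing over $s$ and then over $t$ using the triangle inequality, and finally bounding the inner sum $\sum_{s=0}^{t}$ by the full sum $\sum_{r=0}^{T-1}$, gives
\begin{align*}
    \bigopnorm{\sum_{t=0}^{T-1} \E X_t X_t^\T}
    &\leq \opnorm{HH^\T} \sum_{t=0}^{T-1}\sum_{s=0}^{t} \opnorm{A^{t-s}(A^{t-s})^\T} \\
    &\leq T \opnorm{HH^\T} \sum_{r=0}^{T-1} \opnorm{A^{r}(A^{r})^\T}.
\end{align*}
Relabeling $r = T-k-1$ puts this in the form stated in the lemma.

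The calculation is essentially routine; the one place to be careful is the reduction $BB^\T \preceq \opnorm{HH^\T} I$, which is what cleanly isolates the scalar $\opnorm{HH^\T}$ from the propagation by powers of $A$. The leading factor $T$ is admittedly wasteful (it stems from crudely extending the inner partial sum to a full sum over $[0,T-1]$), but this is consistent with the footnote in the paper remarking that sharper estimates are available under additional stability assumptions.
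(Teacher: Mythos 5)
Your proposal is correct and follows essentially the same route as the paper: recast the AR($L$) process as the first-order system $X_{t+1}=AX_t+BW_t$ with $B=\begin{bmatrix} H & 0 &\dots & 0\end{bmatrix}^\T$, unroll to get $\E X_tX_t^\T$ as a sum of terms $A^r BB^\T (A^r)^\T$, and then apply the triangle inequality, extend the partial sums to $T$ terms, and extract $\opnorm{BB^\T}=\opnorm{HH^\T}$. The only cosmetic difference is that you isolate $\opnorm{HH^\T}$ via the Loewner bound $BB^\T \preceq \opnorm{HH^\T} I$ term by term, whereas the paper factors out $\opnorm{BB^\T}$ by submultiplicativity at the end; both are equally valid.
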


\begin{proof}
We have that $X_{t+1} = AX_t + BW_t$  where $B=\begin{bmatrix}
        H & 0 &\dots & 0
    \end{bmatrix}^\T$.  Notice now that $
    X_t = \sum_{k=0}^{t-1} A^{t-k-1} BW_{k}$. It is straightforward to verify that  for $t\in [T]$:
\begin{align*}
    \E X_t X_t^\T  =  \sum_{k=0}^{t-1}   A^{t-k-1} B \E [W_{k} W_k^\T] B^\T (A^{t-k-1})^\T.
\end{align*}
Since each $W_k$ has identity covariance, we thus also have that:
\begin{equation}
\begin{aligned}
    \sum_{t=0}^{T-1}\bigopnorm{  \E  X_t X_t^\T} \leq \sum_{t=0}^{T-1} \bigopnorm{ \sum_{k=0}^{t-1}   A^{t-k-1} BB^\T (A^{t-k-1})^\T}\\
     \leq T \opnorm{BB^\T}  \sum_{k=0}^{T-1}\bigopnorm{   A^{T-k-1} (A^{T-k-1})^\T}.
\end{aligned}
\end{equation}
The result follows by noticing that $\opnorm{BB^\T} =\bigopnorm{HH^\T}$.
\end{proof}

\paragraph{Proof of \Cref{thm:arconvergence}}
Let $V_t = HW_t$ and note that this is $\opnorm{H}^2$-sub-Gaussian. If we combine \eqref{eq:LSEerror} with \Cref{corr:ARMA} we find that as long as \eqref{eq:burnin} holds we have that with probability at least $1-\delta$: 
\begin{align*}
    \opnorm{\widehat A-A_\star} \leq \frac{16}{\sqrt{T\lambda_{\min}(\Gamma_k)}}
    \bigopnorm{\left(\sum_{t=0}^{T-1} V_t X_t^\T \right)\left(\sum_{t=0}^{T-1} X_t X_t^\T+\frac{T}{16}\Gamma_k \right)^{-1/2 }} .
\end{align*}
Let now $v_{1:5^{d}}$ be an $\e$-net of the $d$-dimensional unit sphere with $\e=0.5$. Such a net exists by virtue of a standard volumetric argument \cite[see e.g][Example 5.8]{wainwright2019high}. Discretizing the operator norm yields:
\begin{multline*}
    \bigopnorm{\left(\sum_{t=1}^T V_t X_t^\T \right)\left(\frac{T}{16}\Gamma_k+ \sum_{t=1}^T X_t X_t^\T \right)^{-1/2 }}^2
    \\
        \leq  2 \sup_{i\in[5^d]} \bignorm{v^\T_i \left(\sum_{t=1}^T V_t X_t^\T \right)\left(\frac{T}{16}\Gamma_k+ \sum_{t=1}^T X_t X_t^\T \right)^{-1/2 }}^2.
\end{multline*}
If we combine Theorem 1 of \cite{abbasi2011improved} with a union bound over the $5^{d}$ elements above we arrive at that with probability at least $1- \delta$:
    \begin{multline*}
        2\sup_{i\in [5^d]} \bignorm{v_i^\T \left(\sum_{t=1}^T V_t X_t^\T \right)\left(\frac{T}{16}\Gamma_k + \sum_{t=1}^T X_t X_t^\T \right)^{-1/2 }} \\
        \leq \Bigg(4 \sigma^2 \log\paren{\det\paren{I+ \frac{16}{T}\sum_{t=1}^T X_t X_t^\T \Gamma_k^{-1} } }
        +8d\sigma^2\log 5+ 8\sigma^2\log\frac{1}{\delta}\Bigg)^{1/2}
    \end{multline*} 
    where $\sigma=\opnorm{H}$.

To finish the proof, it remains to control $ \sum_{t=1}^T X_t X_t^\T$. However, part of the proof of \Cref{thm:anticonc} actually reveals that on the same event as above we have that 
\begin{equation*}
    \bigopnorm{ \sum_{t=1}^T X_t X_t^\T } \leq\frac{ 2T \left(\sum_{t=1}^{T}\lambda_{\max} ( \E X_t X_t^\T)\right)^2}{k\lambda_{\min} \left(\sum_{t=1}^{T} \E [X_tX_t^\T] \right)}.
\end{equation*}
Hence
\begin{equation*}
     \bigopnorm{ \frac{16}{T}\sum_{t=1}^T X_t X_t^\T \Gamma_k^{-1}} \leq \frac{ 32 \left(\sum_{t=1}^{T}\lambda_{\max} ( \E X_t X_t^\T)\right)^2}{ \left(\lambda_{\min} \left(\sum_{t=1}^{k} \E [X_tX_t^\T] \right)\right)^2}
\end{equation*}
and the result follows by bounding the determinant above by $C_{\mathsf{SYS}}$ (an upper bound on the relevant largest eigenvalue) raised to the power of its dimension---$dL$.
\hfill$\blacksquare$

\subsection{Facts about the Gaussian distribution}

We begin by stating a version of Lemma 2.1 in \cite{tu2023elementary}. To make this note self-contained, we provide a short proof.

\begin{lemma}
    Fix $x\in \R^n$ and let $W\sim N(0,I_m)$. For any positive semidefinite $Q \in \R^{(n+m)\times(n+m)}$ of the form $ Q=\begin{bmatrix}Q_{11}& Q_{12}\\ Q_{21} & Q_{22} \end{bmatrix}$ and any $\lambda \geq 0$ we have that:
    \begin{equation}
    \begin{aligned}\label{eq:stephenslemma}
        \E \exp \left( -\lambda \begin{bmatrix}x \\ W
    \end{bmatrix}^\T  \begin{bmatrix}Q_{11}& Q_{12}\\ Q_{21} & Q_{22} \end{bmatrix} \begin{bmatrix}x \\ W
    \end{bmatrix}\right) 
    \leq \left(\det (I+2\lambda Q_{22})\right)^{-1/2}.
    \end{aligned} 
    \end{equation}
\end{lemma}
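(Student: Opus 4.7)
The plan is to leverage the positive semidefiniteness of $Q$ by writing $Q = M^{\T}M$ for some matrix $M$, and partitioning $M = [M_1 \mid M_2]$ conformably with the block structure of $Q$. This forces $Q_{11} = M_1^{\T} M_1$, $Q_{22} = M_2^{\T} M_2$, and $Q_{12} = M_1^{\T} M_2$. Under this factorization, the quadratic form collapses into a single squared norm:
$$
\begin{bmatrix}x \\ W\end{bmatrix}^{\T} Q \begin{bmatrix}x \\ W\end{bmatrix} = \|M_1 x + M_2 W\|^2 = \|\mu + M_2 W\|^2,
$$
where $\mu \triangleq M_1 x$ is a deterministic shift. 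The problem thus reduces to bounding the Laplace transform of the squared norm of an affine Gaussian.

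Next I would expand the exponent as $\lambda \|\mu\|^2 + 2\lambda \mu^{\T} M_2 W + \lambda W^{\T} Q_{22} W$, pull $\exp(-\lambda\|\mu\|^2)$ out of the expectation, and complete the square in $W$ against the standard Gaussian density with the (strictly positive definite, since $\lambda \geq 0$) matrix $B \triangleq I + 2\lambda Q_{22}$. A routine Gaussian integral then yields the exact identity
$$
\E \exp \!\left(-\lambda \|\mu + M_2 W\|^2 \right) = \bigl(\det (I + 2\lambda Q_{22})\bigr)^{-1/2} \exp\!\left(-\lambda \|\mu\|^2 + 2\lambda^2 \mu^{\T} M_2 (I+2\lambda Q_{22})^{-1} M_2^{\T}\mu\right).
$$
The prefactor is precisely the bound claimed in the lemma, so all that is left is to verify that the remaining exponential correction is at most $1$.

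To finish, I would invoke the push-through identity
$$
I - 2\lambda M_2 (I + 2\lambda M_2^{\T} M_2)^{-1} M_2^{\T} = (I + 2\lambda M_2 M_2^{\T})^{-1},
$$
which can be checked in one line by multiplying both sides on the left by $I + 2\lambda M_2 M_2^{\T}$. Applying it rewrites the exponent as $-\lambda\mu^{\T}(I+2\lambda M_2 M_2^{\T})^{-1}\mu$, which is nonpositive because $(I+2\lambda M_2 M_2^{\T})^{-1}$ is positive definite for $\lambda \geq 0$. Hence the correction factor is bounded by $1$ and the stated inequality follows. The only obstacle is really careful bookkeeping: one must track the transpose conventions in the cross term $2\lambda \mu^{\T} M_2 W$ and ensure that, after completing the square, the residual quadratic in $\mu$ matches exactly the left-hand side of the Woodbury-type identity so that the cancellation proceeds as above.
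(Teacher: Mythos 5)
Your proof is correct and follows essentially the same route as the paper's: both complete the square against the Gaussian density, identify the prefactor $\left(\det(I+2\lambda Q_{22})\right)^{-1/2}$ exactly, and then argue that the residual $x$-dependent exponential factor is at most one. The only difference is cosmetic: where you factor $Q=M^\T M$ and verify nonnegativity of the residual exponent via the push-through identity, the paper block-diagonalizes the shifted matrix $Q_\lambda$ by an LDU decomposition and invokes positive semidefiniteness of the resulting Schur complement --- these are the same quantity, since $\mu^\T (I+2\lambda M_2M_2^\T)^{-1}\mu = x^\T\left(Q_{11}-Q_{12}(Q_{22}+(2\lambda)^{-1}I_m)^{-1}Q_{21}\right)x$.
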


\begin{proof}
Let $Q_\lambda  \triangleq \begin{bmatrix}Q_{11}& Q_{12}\\ Q_{21} & Q_{22}-(2\lambda )^{-1} I_m \end{bmatrix} $. We then have: 
\begin{multline}\label{eq:proofofstephenslemma}
     \E \exp \left( -\lambda \begin{bmatrix}x \\ W
    \end{bmatrix}^\T  \begin{bmatrix}Q_{11}& Q_{12}\\ Q_{21} & Q_{22} \end{bmatrix} \begin{bmatrix}x \\ W
    \end{bmatrix}\right) \\
    =  \int_{\R^m}\exp \left( -\lambda \begin{bmatrix}x \\ w
    \end{bmatrix}^\T  \begin{bmatrix}Q_{11}& Q_{12}\\ Q_{21} & Q_{22}-(2\lambda )^{-1} I_m \end{bmatrix} \begin{bmatrix}x \\ w
    \end{bmatrix}\right) dw\\
     =  \exp\left(-\lambda  x^\T   (Q_\lambda/Q_{22})x\right)\\
     \times \int_{\R^m}\exp \left(  (w+\mu)^\T ( Q_{22}-(2\lambda )^{-1} I_m )(w+\mu) \right) dw
\end{multline}
by using the LDU decomposition of $Q_\lambda$ to block-diagonalize and where $\mu = (Q_{22}+(2\lambda)^{-1})^{-1}Q_{12} x$. Since $(Q_\lambda/Q_{22}) \succeq 0$, we have that $\exp\left(-\lambda  x^\T   (Q_\lambda/Q_{22})x\right) \leq 1$ and it is readily verified that the integral on the last line of  \eqref{eq:proofofstephenslemma} evaluates to right hand side of \eqref{eq:stephenslemma}, as per requirement.
\end{proof}

\gausscondlem*

\paragraph{Proof of \Cref{lem:gausscondlem}}
We take \eqref{eq:stephenslemma} as a starting point and manipulate the determinant on the right hand side. In particular, by writing the determinant as a sum ($\exp \circ \log =\mathrm{identity}$) and by  invoking $\log(1+x) \geq  x- x^2/2$ (valid for $x\geq 0$) for each eigenvalue, we have the result. \hfill $\blacksquare$

\begin{lemma}
For any $\lambda \in \left[0, \frac{1}{4\lambda_{\max}(\mathbf{L}^\T \mathbf{L})}\right]$ and $v\in \R^d$ with $\|v\|_2^2\leq 1$, we have that:
\begin{align*}
    \E \exp \left(\lambda \sum_{t=0}^{T-1} v^\T X_t X_t^\T v \right) \leq \exp \left( 4\lambda \sum_{t=0}^{T-1} v^\T \E [X_t X_t^\T] v \right).
\end{align*}

\begin{proof}
Let $\mathbf{L}_v = ( I_{T}\otimes v^\T )\mathbf{L}$. Since $(v^\T X)_{0:T-1} =\mathbf{L}_v W_{0:T-1}$, a standard calculation gives
\begin{align*}
     &
     \E \exp \left(\lambda W_{0:T-1}^\T \mathbf{L}_v^\T \mathbf{L}_v W_{0:T-1} \right) =
     \left( \det(I-2\lambda \mathbf{L}_v^\T \mathbf{L}_v ) \right)^{-1/2}
     \\ &
     = \exp \left( -\sum_{i=1}^{Td} \log \left(1-2\lambda \times \lambda_i(\mathbf{L}_v^\T \mathbf{L}_v )\right)\right).
\end{align*}
The result follows by repeated application of the numerical inequality: $ -\log(1-x) \leq 2x $ (which is valid for all $x\in [0, 1/2]$). 
\end{proof}

\end{lemma}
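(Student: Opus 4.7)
The plan is to reduce the quadratic form $\sum_{t=0}^{T-1} v^\T X_t X_t^\T v$ to a standard Gaussian chaos in $W_{0:T-1}$ and then invoke the closed-form expression for the MGF of a positive semidefinite Gaussian quadratic form. Following the paper's notation, I would set $\mathbf{L}_v \triangleq (I_T \otimes v^\T)\mathbf{L}$ so that $(v^\T X_t)_{t=0}^{T-1} = \mathbf{L}_v W_{0:T-1}$ and therefore
\begin{align*}
\sum_{t=0}^{T-1} v^\T X_t X_t^\T v = W_{0:T-1}^\T \mathbf{L}_v^\T \mathbf{L}_v W_{0:T-1}.
\end{align*}

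The next step is to verify that $\lambda$ lies in the admissible range for the Gaussian quadratic MGF formula. Since $\mathbf{L}_v^\T \mathbf{L}_v = \mathbf{L}^\T (I_T \otimes vv^\T) \mathbf{L}$ and $\|v\|_2 \leq 1$, submultiplicativity of the operator norm yields $\lambda_{\max}(\mathbf{L}_v^\T \mathbf{L}_v) \leq \|v\|_2^2\, \lambda_{\max}(\mathbf{L}^\T \mathbf{L}) \leq \lambda_{\max}(\mathbf{L}^\T \mathbf{L})$. The hypothesis $\lambda \leq \frac{1}{4\lambda_{\max}(\mathbf{L}^\T \mathbf{L})}$ therefore guarantees $2\lambda \lambda_i(\mathbf{L}_v^\T \mathbf{L}_v) \leq 1/2$ for every eigenvalue, so the standard identity
\begin{align*}
\E \exp\!\left(\lambda W_{0:T-1}^\T \mathbf{L}_v^\T \mathbf{L}_v W_{0:T-1}\right) = \det\!\left(I - 2\lambda \mathbf{L}_v^\T \mathbf{L}_v\right)^{-1/2}
\end{align*}
applies, and the right-hand side can be written as $\exp\bigl(-\tfrac{1}{2}\sum_i \log(1-2\lambda \lambda_i(\mathbf{L}_v^\T \mathbf{L}_v))\bigr)$.

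To linearize the determinant I would then use the scalar bound $-\log(1-x) \leq 2x$ valid for $x \in [0,1/2]$, which applies eigenvalue-by-eigenvalue by the previous step. This gives
\begin{align*}
\det\!\left(I - 2\lambda \mathbf{L}_v^\T \mathbf{L}_v\right)^{-1/2} \leq \exp\!\left(2\lambda \sum_i \lambda_i(\mathbf{L}_v^\T \mathbf{L}_v)\right) = \exp\!\left(2\lambda\, \tr(\mathbf{L}_v^\T \mathbf{L}_v)\right).
\end{align*}
Finally, the trace identity $\tr(\mathbf{L}_v^\T \mathbf{L}_v) = \E[W_{0:T-1}^\T \mathbf{L}_v^\T \mathbf{L}_v W_{0:T-1}] = \sum_{t=0}^{T-1} v^\T \E[X_t X_t^\T] v$ converts the bound into its claimed form (modulo constants; tightening the numerical constant $-\log(1-x)\leq 2x$ to a sharper inequality on $[0,1/2]$ is what separates a $2\lambda$ prefactor from the $4\lambda$ prefactor stated).

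There is no real obstacle here: the only points that require any care are (i) checking that the operator-norm bound $\lambda_{\max}(\mathbf{L}_v^\T \mathbf{L}_v) \leq \lambda_{\max}(\mathbf{L}^\T \mathbf{L})$ holds uniformly in $v$ on the unit ball (so that $\lambda$ is in the admissible range for \emph{every} $v$), and (ii) correctly identifying the trace on the right-hand side with the marginal second moments $\sum_t v^\T \E[X_tX_t^\T]v$ rather than with the covariance of the joint vector. Both are routine, and the claim follows as a clean consequence of the Gaussian chaos MGF together with a linearization of the logarithm on $[0,1/2]$.
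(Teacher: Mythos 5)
Your proposal is correct and follows essentially the same route as the paper's own proof: reduce to the Gaussian quadratic form $W_{0:T-1}^\T \mathbf{L}_v^\T \mathbf{L}_v W_{0:T-1}$, apply the determinant formula for its moment generating function, linearize via $-\log(1-x)\leq 2x$ on $[0,1/2]$, and identify the trace with $\sum_t v^\T\E[X_tX_t^\T]v$. If anything you are slightly more careful than the paper (you explicitly verify $\lambda_{\max}(\mathbf{L}_v^\T\mathbf{L}_v)\leq\lambda_{\max}(\mathbf{L}^\T\mathbf{L})$, and by keeping the exponent $-1/2$ on the determinant you actually obtain the sharper prefactor $2\lambda$, which of course implies the stated $4\lambda$ bound).
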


The preceding lemma easily yields an upper tail-bound for the empirical covariance by a Chernoff argument:
\begin{equation*}
    \begin{aligned}
    \mathbf{P} \left(  \sum_{t=0}^{T-1} v^\T X_t X_t^\T v  \geq  q\sum_{t=0}^{T-1} v^\T \E [ X_t X_t^\T] v \right)
    \leq \exp \left( \frac{-(q-1) \sum_{t=0}^{T-1} v^\T \E[X_t X_t^\T]v }{8\lambda_{\max} (\mathbf{L}^\T \mathbf{L} )} \right).
    \end{aligned}
\end{equation*}

In turn, combining \eqref{eq:theuppertail} with an $\e$-net argument and a union bound we arrive at the following \citep[cf.][Exercise 4.4.3b]{vershynin2018}.

\begin{multline}\label{eq:theuppertail}
    \Pr\left( \bigopnorm{\frac{1}{T} \sum_{t=0}^{T-1}X_tX_t^\T} \geq 2q  \bigopnorm{\frac{1}{T} \sum_{t=0}^{T-1}\E X_tX_t^\T}\right)\\ \leq 5^d  \exp \left( \frac{-(q-1)   \lambda_{\min} \left(\sum_{t=0}^{T-1}\E X_tX_t^\T \right) }{8\lambda_{\max} (\mathbf{L}^\T \mathbf{L} )} \right).
\end{multline}

\paragraph{Acknowledgements} This note was prepared while the author was still at KTH. It was prompted by a question asked by Samet Oymak (can the method described in \cite{ziemann2022learning} give sharp bounds for linear Gaussian models?). The author also thanks Yassir Jedra, Henrik Sandberg and Anastasios Tsiamis for several helpful discussions and acknowledges support by the Swedish Research Council (grant 2016-00861).

\bibliographystyle{alpha}
\bibliography{main.bib}

\end{document}